\documentclass[11pt]{article}
\usepackage[left=1in, right=1in, top=1in, bottom=1in, margin=1in]{geometry}

\usepackage[utf8]{inputenc} 
\usepackage[T1]{fontenc}
\usepackage{verbatim}

\usepackage[backend=bibtex, style = numeric, doi = false, url = false, isbn = false, maxbibnames = 6]{biblatex}
\bibliography{references.bib}
\renewbibmacro{in:}{}      
\newbibmacro{string+doi}[1]{\iffieldundef{doi}{#1}{\href{https://dx.doi.org/\thefield{doi}}{#1}}}
\DeclareFieldFormat{title}{\usebibmacro{string+doi}{\mkbibemph{#1}}}
\DeclareFieldFormat[article]{title}{\usebibmacro{string+doi}{\mkbibquote{#1}}}
\DeclareFieldFormat[incollection]{title}{\usebibmacro{string+doi}{\mkbibquote{#1}}}                   
\DeclareFieldFormat[inproceedings]{title}{\usebibmacro{string+doi}{\mkbibquote{#1}}}     

\usepackage[colorlinks=true,linkcolor=blue,urlcolor=blue,citecolor=blue,anchorcolor=green,pdfusetitle]{hyperref}

\usepackage[cmex10]{amsmath}  % Use the [cmex10] option to ensure complicance
\usepackage{amsfonts}
\usepackage{amssymb,amsthm,mathtools,bbm}

\usepackage{cleveref}

\usepackage{mleftright}       % fix to wrong spacing of \left-,
\mleftright                   % \middle- \right-commands 

\usepackage{graphicx}         % provides \includegraphics{...} to
\usepackage{booktabs}         % fixes poor spacing in tables and
\newtheorem{thm}{Theorem}
\newtheorem{prop}[thm]{Proposition}
\newtheorem{lem}[thm]{Lemma}
\newtheorem{cor}[thm]{Corollary}
\theoremstyle{definition}

\newtheorem{ex}[thm]{Example}

\usepackage{braket}
\usepackage{mathtools}
\usepackage{dsfont}
\usepackage{xcolor}

\newcommand{\cE}{\mathcal{E}}

\newcommand{\cH}{\mathcal{H}}

\newcommand{\Z}{\mathbb{Z}}
\newcommand{\1}{\mathbbm{1}}

\DeclareMathOperator{\tr}{tr}
\DeclareMathOperator{\im}{im}

\DeclareMathOperator{\ord}{ord}
\DeclareMathOperator{\reg}{reg}
\DeclareMathOperator{\rank}{rank}

\newcommand{\C}{\mathbb{C}}

\newcommand{\wh}{\widehat}

\newcommand{\psucc}{p_{\mathrm{succ}}}

\definecolor{cool_green}{rgb}{0.0, 0.5, 0.0}

\usepackage[affil-it]{authblk}

\begin{document}
\title{On the sample complexity of the generalized hidden shift problem over arbitrary finite groups}
\author{Juntai Zhou\thanks{\href{mailto:juntaiz2@illinois.edu}{juntaiz2@illinois.edu}}}
\affil{\small Department of Mathematics, University of Illinois Urbana-Champaign}
\date{} 

\maketitle

\begin{abstract}
    We formulate the generalized hidden shift problem over arbitrary finite groups and obtain upper and lower bounds for the sample complexity in terms of group theoretic data.
\end{abstract}

\section{Introduction}

$\quad\:$The strongest known quantum algorithms often obtain speedups by recovering global algebraic structure hidden by an oracle. Simon's algorithm gives an exponential oracle-query advantage for finding a hidden xor-mask, while Shor's period-finding algorithm enables polynomial-time factoring and discrete logarithms \cite{Simon1997,Shor1997}. More generally, in hidden subgroup problem (HSP), given a finite group $G$ and an oracle $f:G\rightarrow S$ that is constant on each left coset of an unknown subgroup $H\leq G$ and distinct across cosets, the goal is to recover $H$. Abelian HSPs, including Simon's and Shor's problems, are efficiently solved using the quantum Fourier transform. For nonabelian groups, polynomial query complexity need not imply efficient measurements or post-processing, which keeps these variants central to quantum algorithms research \cite{EttingerHoyerKnill2004,HallgrenRussellTaShma2003,ChildsVanDam2010}.

Hidden shift problems encode a related symmetry. Given functions $f_0,f_1:G\rightarrow S$ satisfying
\begin{equation}
    f_0(x)=f_1(xs),
    \label{eq:hidden-shift}
\end{equation}
for an unknown $s\in G$, the task is to recover $s$ under suitable injectivity or distinctness promises. Structured cases such as shifts of the Legendre symbol are efficiently solvable \cite{VanDamHallgrenIp2006}; the injective cyclic case over $\Z_N$, however, is equivalent to the dihedral HSP. Kuperberg's sieve gives a subexponential-time quantum algorithm, but no accepted general polynomial-time solution is known \cite{Kuperberg2005}.

Childs and van Dam introduced a controlled interpolation between the difficult dihedral endpoint and an easy abelian endpoint \cite{ChildsVanDam2007}. Let $2\leq M\leq N$, let $S$ be a finite set, and suppose there is oracle access to
\begin{equation}
    f:\{0,1,\ldots,M-1\}\times \Z_N \longrightarrow S.
\end{equation}
For each fixed $b$, the slice $f_b(x):=f(b,x)$ is promised to be injective. In addition, there is an unknown $s\in\Z_N$ such that
\begin{equation}
    f(b,x)=f(b+1,x+s)
    \qquad
    \text{for } b=0,1,\ldots,M-2.
    \label{eq:generalized-promise}
\end{equation}
The \emph{generalized hidden shift problem} is to determine $s$. Equation~\eqref{eq:generalized-promise} says that the $M$ injective functions are successive translates of one another by the same hidden displacement. Equivalently, after fixing the first slice, one may write $f_b(x)=f_0(x-bs)$.

The parameter $M$ changes the algebraic structure available to the algorithm. When $M=2$, the promise reduces to the standard injective hidden shift problem over $\Z_N$, and hence to the dihedral HSP. When $M=N$, the first coordinate can be treated modulo $N$, and $f$ is constant on cosets of the cyclic subgroup
\begin{equation}
    \langle(1,s)\rangle \leq \Z_N\times\Z_N.
\end{equation}
This is an abelian HSP and is efficiently solvable by Fourier sampling. For intermediate values of $M$, the first coordinate does not wrap around to supply a full group action, and the problem is not naturally an HSP over a finite group. The family therefore isolates a meaningful transition: decreasing $M$ gradually removes the redundant shifted views that make the abelian endpoint easy while preserving a precise, analyzable promise.

The generalized hidden shift problem can be extended from cyclic groups to finite nonabelian groups. Childs and Wocjan explicitly identified the $M$-function nonabelian version as a natural generalization and showed that the two-function case over $S_n$ provides a direct formulation of rigid graph isomorphism and related group-action isomorphism problems, including code equivalence \cite{ChildsWocjan2007}. The nonabelian setting also exposes why additional shifted views may be valuable: on symmetric-group instances, single-register measurements require exponentially many hidden-shift states, while closely related hidden-subgroup results rule out strong Fourier sampling and point toward entangled measurements across multiple registers \cite{ChildsWocjan2007,MooreRussellSchulman2008}. Consequently, varying $M$ in the nonabelian problem could provide a controlled way to study how redundancy in the oracle, noncommutative representation theory, and collective measurement complexity interact, with potential implications for quantum algorithms for isomorphism testing and other symmetry-finding problems beyond the reach of abelian Fourier methods.

In this article, we study the sample complexity of the generalized hidden shift problem over generic finite groups. The remainder of this article is organized as follows: In section 2, we introduce preliminaries for the concepts and tools that will be used, including basic representation theory and quantum Fourier transform. In section 3, we rigorously formulate the generalized hidden shift problem over generic finite groups, which reduces to a standard state discrimination problem, and rewrite the states to be distinguished into block-diagonal form by applying quantum Fourier transform and Schur's orthogonality. In section 4, we analyze the sample complexity of this problem by applying generalized Holevo-Curlander inequality \cite{tyson2010two-sided,tyson2009two-sided}, obtaining upper and lower bounds in terms of group theoretic data. These results recover the analysis for cyclic groups by Childs and van Dam \cite{Childs2006generalized_hidden_shift}, while showing a nontrivial $\Omega(\log |G|)$ lower bound when $G=(\Z_2)^n$ and $\Omega(\sqrt{\log|G|})$ for $G=S_n$ (while Montanaro \cite{Montanaro2019PrettySimpleBounds} gives a trivial upper bound $O(\log|G|)$ for any finite group $G$).

\section{Preliminaries}
\subsection{Quantum state discrimination }
Let $\cE=(p_i,\rho_i)_{i=1}^N$ be a quantum state ensemble consisting of quantum states $\rho_1,\dots,\rho_N$ and a probability distribution $(p_1,\dots,p_N)$.
In the minimum average error setting, the optimal success probability $\psucc^*$ of distinguishing the states in $\cE$ is given by the following semidefinite program (SDP):
\begin{align}
	\psucc^* = \max\left\lbrace \sum\nolimits_{i=1}^N p_i \tr(M_i \rho_i) : M_i\geq 0 \text{ for all $i=1,\dots,N$}, \sum\nolimits_{i=1}^N M_i = \1\right\rbrace.
	\label{eq:primal}
\end{align}
The dual program gives the same value because of strong duality, and can be expressed as follows:
\begin{align}
	\psucc^* = \min\left\lbrace \tr K : K\geq p_i\rho_i \text{ for all $i=1,\dots,N$}\right\rbrace.
	\label{eq:dual}
\end{align}

A commonly considered measurement in state discrimination is the \emph{pretty good measurement }(PGM) or \emph{square-root measurement} \cite{belavkin1975optimal,holevo1979asymptotically,hausladen1994pretty}, defined as 
\begin{align}
	M_i=p_i\overline{\rho}^{-\frac{1}{2}}\rho_i\overline{\rho}^{-\frac{1}{2}} \label{eq:pgm}
\end{align} 
where $\overline{\rho}=\sum_i p_i\rho_i$ is the average state and the inverse is taken on the support of $\overline{\rho}$.
Without loss of generality, one can restrict the Hilbert space on which the states $\rho_i$ act to this support without changing the optimal value in \eqref{eq:primal} or \eqref{eq:dual}.

\subsection{Representation theory of finite groups}
We follow the treatment in \cite{FultonHarris1991RepTheory}. A unitary representation of a finite group $G$ is a pair $(\mu,\mathcal{H}_\mu)$ where $\mathcal{H}_\mu$ is a Hilbert space and $\mu:G\to U(\mathcal{H}_\mu)$ is a homomorphism. The character of a representation $\pi$ is the function $\chi_\pi(g):=\tr\pi(g)$. We denote $\hat{G}$ to be the set of irreducible representations of $G$ and $\mathbb{C}[G]\cong\bigoplus_{\pi\in\hat{G}}\cH_\pi^*\otimes\cH_\pi$ to be the group algebra. The regular representation acts on the group algebra by $\reg_G(g):=\bigoplus_{\pi\in\hat{G}}\1_{\cH_\pi^*}\otimes\pi(g)$ with character $\chi_{\reg}(g)=|G|\delta_{ge}$. We will be using the following standard results in representation theory:

\begin{lem}[Fourier basis of group algebra]\label{lemma: Fourier basis of group algebra}
    If $G$ is a finite abelian group, then the characters define an orthonormal basis for $\C[G]$:
    \begin{align}
        \ket{e_\chi}:=\frac{1}{\sqrt{|G|}}\sum_{g\in G}\overline{\chi(g)}\ket{h}
    \end{align}
    and the regular representation acts by
    \begin{align}
        \reg_G(g)\ket{e_\chi}=\chi(g)\ket{e_\chi}.
    \end{align}
\end{lem}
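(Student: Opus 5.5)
Note first that the statement has a typo: the ket in the sum should be $\ket{g}$, not $\ket{h}$. Throughout I take $\ket{e_\chi}=|G|^{-1/2}\sum_{h\in G}\overline{\chi(h)}\ket{h}$. I also read the regular representation concretely as left translation on the standard basis, $\reg_G(g)\ket{h}=\ket{gh}$. This has character $|G|\delta_{ge}$, consistent with the preliminaries. The plan is to verify orthonormality, then count dimensions to get a basis, and finally compute the action of $\reg_G(g)$ directly.

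\textbf{Orthonormality.} For an abelian $G$, every irreducible representation is one-dimensional, so each $\chi\in\hat G$ is a homomorphism $G\to U(1)$. In particular $\overline{\chi(h)}=\chi(h^{-1})$. Then
\begin{align}
\braket{e_\chi|e_\psi}=\frac{1}{|G|}\sum_{h\in G}\chi(h)\overline{\psi(h)} .
\end{align}
This is the inner product of characters, which equals $\delta_{\chi\psi}$ by Schur orthogonality. To be self-contained, I would prove this directly. Set $\phi=\chi\overline{\psi}$, itself a homomorphism into $U(1)$. If $\phi\neq 1$, pick $k$ with $\phi(k)\neq1$. Reindexing $h\mapsto kh$ gives $S=\phi(k)S$, hence $S=0$.

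\textbf{Basis.} The sum $\sum_\pi(\dim\pi)^2=|G|$ with all dimensions equal to $1$ gives $|\hat G|=|G|=\dim\C[G]$. So $|G|$ orthonormal vectors form a basis. The only point needing care is this dimension count; it follows from the decomposition $\C[G]\cong\bigoplus_\pi\cH_\pi^*\otimes\cH_\pi$ stated in the preliminaries.

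\textbf{Action of $\reg_G(g)$.} We have
\begin{align}
\reg_G(g)\ket{e_\chi}=\frac{1}{\sqrt{|G|}}\sum_h\overline{\chi(h)}\ket{gh}=\frac{1}{\sqrt{|G|}}\sum_{h'}\overline{\chi(g^{-1}h')}\ket{h'} .
\end{align}
Using multiplicativity, $\overline{\chi(g^{-1}h')}=\overline{\chi(g^{-1})}\,\overline{\chi(h')}=\chi(g)\overline{\chi(h')}$, so $\reg_G(g)\ket{e_\chi}=\chi(g)\ket{e_\chi}$. The one thing to watch is the convention: left versus right translation, and the conjugate in the definition of $\ket{e_\chi}$. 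With right translation $\ket{h}\mapsto\ket{hg^{-1}}$ the same eigenvalue arises, and since $G$ is abelian both conventions agree.
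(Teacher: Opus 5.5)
The paper lists this lemma among standard facts and gives no proof, so there is no argument to compare against. Your argument is the standard one: orthogonality via the reindexing trick, completeness from $|\hat G|=|G|$, and the eigenvalue by reindexing $h'=gh$. It is correct for the left-translation convention $\reg_G(g)\ket{h}=\ket{gh}$. You are also right that $\ket{h}$ in the statement should be $\ket{g}$. Left translation is the convention compatible with the paper's abstract definition $\reg_G(g)=\bigoplus_\pi\1_{\cH_\pi^*}\otimes\pi(g)$ under the QFT identification. For abelian $G$ one checks $F_G^\dagger\ket{\chi}=\ket{e_\chi}$, and $F_G\,\reg_G(k)\,F_G^\dagger=\bigoplus_\chi\chi(k)$ when $\reg_G(k)\ket{g}=\ket{kg}$.

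Your closing remark, however, is false and should be deleted or fixed. With $\reg_G(g)\ket{h}=\ket{hg^{-1}}$, substituting $h'=hg^{-1}$ gives
\begin{align}
\reg_G(g)\ket{e_\chi}=\frac{1}{\sqrt{|G|}}\sum_{h'\in G}\overline{\chi(h'g)}\ket{h'}=\overline{\chi(g)}\ket{e_\chi},
\end{align}
not $\chi(g)\ket{e_\chi}$. Commutativity does not make the two conventions coincide. It makes $\ket{h}\mapsto\ket{hg^{-1}}$ equal to left translation by $g^{-1}$, i.e.\ the inverse of your operator.

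The right-multiplication convention that does agree with yours is $\ket{h}\mapsto\ket{hg}$. This is a representation exactly because $G$ is abelian, and it is the map $\ket{g}\mapsto\ket{gh^b}$ appearing in the paper's hidden-shift states. So the eigenvalue claim depends on matching the conjugate in $\ket{e_\chi}$ to the convention for $\reg_G$. With $\ket{hg^{-1}}$ you would have to drop the conjugate in the definition of $\ket{e_\chi}$, or equivalently relabel $\chi\leftrightarrow\overline{\chi}$.
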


\begin{lem}[Schur's orthogonality]\label{lemma: Schur's orthogonality}
For any irreducible unitary representations $\mu,\nu\in\hat{G}$,
    \begin{align}
    \frac{1}{|G|}\sum_{g\in G}\mu(g)_{ij}\overline{\nu(g)_{st}}=\frac{1}{d_\mu}\delta_{\mu\nu}\delta_{is}\delta_{jt}\quad\forall\:1\leq i,j\leq\dim\cH_\mu,\:1\leq s,t\leq\dim\cH_\nu.
\end{align}
\end{lem}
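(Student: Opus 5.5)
The plan is to derive Schur's orthogonality from Schur's lemma by averaging a matrix over the group. Fix $\mu,\nu\in\hat{G}$ with dimensions $d_\mu,d_\nu$. For an arbitrary linear map $A:\cH_\nu\to\cH_\mu$, form the twirl
\begin{align}
    T(A):=\frac{1}{|G|}\sum_{g\in G}\mu(g)\,A\,\nu(g)^{-1}.
\end{align}
Reindexing the sum by $g\mapsto hg$ gives $\mu(h)T(A)=T(A)\nu(h)$ for all $h\in G$. So $T(A)$ is an intertwiner between irreducible representations. By Schur's lemma, $T(A)=0$ when $\mu\not\cong\nu$. Since $\hat{G}$ consists of one chosen unitary representative per isomorphism class, $\mu\neq\nu$ implies non-isomorphic. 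When $\mu=\nu$, Schur's lemma gives $T(A)=c(A)\1$ for some scalar $c(A)$. Taking traces and using cyclicity, $\tr T(A)=\tr A$, hence $c(A)=\tr A/d_\mu$.

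Next we specialize to matrix units. Take $A=E_{jt}$, the map sending basis vector $t$ of $\cH_\nu$ to basis vector $j$ of $\cH_\mu$. Unitarity gives $\nu(g)^{-1}=\nu(g)^*$, so $(\nu(g)^{-1})_{ts'}=\overline{\nu(g)_{s't}}$. Reading off the $(i,s)$ entry of $T(E_{jt})$ then gives
\begin{align}
    \frac{1}{|G|}\sum_{g}\mu(g)_{ij}\overline{\nu(g)_{st}}.
\end{align}
By the first step this entry is $0$ when $\mu\neq\nu$. When $\mu=\nu$ it equals $\frac{\tr E_{jt}}{d_\mu}\delta_{is}=\frac{1}{d_\mu}\delta_{jt}\delta_{is}$, which is exactly the claimed identity.

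The main point needing care is this index bookkeeping with the conjugate-transpose, together with the convention that distinct elements of $\hat{G}$ are non-isomorphic. The only nontrivial ingredient is Schur's lemma itself. It holds over $\C$ because an intertwiner's eigenspace is a nonzero subrepresentation, and its kernel and image are subrepresentations.
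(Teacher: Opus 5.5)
Your proof is correct: the twirl $T(A)$ intertwines $\nu$ and $\mu$, so Schur's lemma forces $T(A)=\delta_{\mu\nu}\frac{\tr A}{d_\mu}\1$, and taking $A=E_{jt}$ together with $(\nu(g)^{-1})_{ts}=\overline{\nu(g)_{st}}$ gives exactly the stated identity. The $\delta_{\mu\nu}$ form does need your convention that distinct elements of $\hat{G}$ are non-isomorphic. The paper gives no proof of this lemma; it quotes it as a standard fact (following Fulton--Harris), and your averaging argument is the standard textbook proof, so nothing is missing.
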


\iffalse
\begin{lem}[column orthogonality of characters]\label{lemma: column orthogonality of characters}
    Denote $Cl(g)$ to be the conjugacy class of $g\in G$. Then
    \begin{align}
        \sum_{\mu\in\hat{G}}|\chi_\mu(g)|^2=\frac{|G|}{|Cl(g)|}.
    \end{align}
\end{lem}

\begin{lem}[projection formula]\label{lemma: projection formula}
    The operator $\frac{1}{|G|}\sum_{g\in G}\mu(g)$ projects onto the fixed point subspace $\mathcal{H}_\mu^G:=\{\ket{v}\in\mathcal{H}_\mu:\mu(g)\ket{v}=\ket{v}\:\forall\:g\in G\}$.
\end{lem}

\subsubsection{Regular representation}
\begin{lem}\label{lemma: regular representation character}
    \begin{align}
        \chi_{\reg}(x)=|G|\delta_{xe}.
    \end{align}
\end{lem}

\subsubsection{Eigenvalues of $\mu(g)$}
A standard lemma in linear algebra:
\begin{lem}\label{lemma: mu(g) diagonalizable}
    If $A\in\text{Mat}_{d\times d}(\mathbb{C})$ has order $n$, i.e. $A^n=\1$, then $A$ is diagonalizable over $\mathbb{C}$ with eigenvalues among the $n-$th roots of unity.
\end{lem}
\begin{proof}
    The minimal polynomial of $A$ splits because it divides the polynomial $x^n-1=\prod_{j\in\Z/n}(x-\omega_n^j)$ by Cayley-Hamilton theorem, so $A$ is diagonalizable with eigenvalues $\omega_n^j$ for $j$ in a subset of $\Z/n$.
\end{proof}
In particular, $\mu(g)$ is diagonalizable for any representation $\mu$ and any $g\in G$ with eigenvalues being roots of unity.
\fi

\subsection{Fourier transform over finite groups}
We follow the treatment in \cite{Moore2004GenericQFT}. Given $f:G\to\mathbb{C}$, the classical Fourier transform of $f$ is the function $\hat{f}:\hat{G}\to\mathbb{C}[G]\cong\bigoplus_{\mu\in\hat{G}}\text{End}(\mathcal{H}_\mu)$ given by
\begin{align}
    \hat{f}(\mu)=\sum_{g\in G}f(g)\mu(g).
\end{align}

The Quantum Fourier Transform (QFT) over a generic group $G$ is given by
\begin{align}
    F_G:\ket{g}\mapsto\frac{1}{\sqrt{|G|}}\sum_{\mu\in\hat{G}}\sqrt{d_\mu}\sum_{i,j=1}^{d_\mu}\mu(g)_{ij}\ket{\mu,i,j}.
\end{align}
By Schur's orthogonality one can verify that $F_G$ is a unitary. In particular, given function $f:G\to\mathbb{C}$, if we encode $\ket{f}:=\sum_g f(g)\ket{g}$, then 
\begin{align}
    F_G\ket{f}=\sum_{g\in G}f(g)F_G\ket{g}=\sum_{\mu\in\hat{G}}\sum_{i,j=1}^{d_\mu}\sqrt{\frac{d_\mu}{|G|}}(\sum_{g\in G}f(g)\mu(g)_{ij})\ket{\mu,i,j}=\sum_{\mu\in\hat{G}}\sum_{i,j=1}^{d_\mu}\sqrt{\frac{d_\mu}{|G|}}\hat{f}(\mu)_{ij}\ket{\mu,i,j},
\end{align}
therefore the quantum Fourier transform is exactly the unitary encoding the classical Fourier transform operation.

\section{Generalized hidden shift problem over generic groups}
\subsection{Set up}
Given $G$, an integer $m$, and a function $f:\{0,1,...,m-1\}\times G\to S$ satisfying:\\
(1) $f(i,\cdot)$ is injective for all $i\in\{0,1,...,m-1\}$;\\
(2) There exists $h\in G$ such that $f(i,g)=f(i+1,gh)$ for all $i\in\{0,1,...,m-2\}$ and $g\in G$.\\
The goal of the generalized hidden shift problem over $G$ is to identify the unknown shift element $h\in G$. This set up is a direct generalization of \cite{ChildsVanDam2007} where $G=\Z/n$ was considered.\\
To give a quantum algorithm, we first prepare the state
\begin{align}
    \frac{1}{\sqrt{m|G|}}\sum_{b\in\Z/m}\sum_{g\in G}\ket{b,g,f(b,g)}.
\end{align}
Measuring the third register and discarding its outcome produces
\begin{align}
    \rho_h=\frac{1}{|G|}\sum_{g\in G}\ket{\phi_{g,h}}\bra{\phi_{g,h}}
\end{align}
where
\begin{align}
    \ket{\phi_{g,h}}=\frac{1}{\sqrt{m}}\sum_{b\in\Z/m}\ket{b,gh^b}.
\end{align}
The problem now reduces to the state discrimination problem of the ensemble $\{\rho_h^{\otimes k}:h\in G\}$ with uniform prior distribution.

\subsection{Rewriting the hidden shift states into block-diagonal form}
Applying QFT to the second register of $\ket{\phi_{g,h}}$ we get
\begin{align}
    \ket{\tilde{\phi}_{g,h}}=\frac{1}{\sqrt{m|G|}}\sum_{b\in\Z/m}\sum_{\mu\in\hat{G}}\sqrt{d_\mu}\sum_{i,j=1}^{d_\mu}\mu(gh^b)_{ij}\ket{b}\ket{\mu,i,j}
\end{align}
and 
\begin{align}\label{eq: state before orthogonality}
    \tilde{\rho}_h&=\frac{1}{|G|}\sum_{g\in G}\ket{\tilde{\phi}_{g,h}}\bra{\tilde{\phi}_{g,h}}\\
    &=\frac{1}{m|G|^2}\sum_{g\in G}\sum_{b,c\in\Z/m}\sum_{\mu,\nu\in\hat{G}}\sqrt{d_\mu d_\nu}\sum_{i,j=1}^{d_\mu}\sum_{s,t=1}^{d_\nu}\mu(gh^b)_{ij}\overline{\nu(gh^c)_{st}}\ket{b,\mu,i,j}\bra{c,\nu,s,t}.
\end{align}
For any $\mu\in\hat{G}$, $1\leq i,j\leq d_\mu$, $b\in\Z/m$, because $\mu$ is a group homomorphism we have
\begin{align}
    \mu(gh^b)_{ij}=\sum_{l=1}^{d_\mu}\mu(g)_{il}\mu(h^b)_{lj}.
\end{align}
Applying Schur's orthogonality we get
\begin{align}
    \frac{1}{|G|}\sum_{g\in G}\mu(gh^b)_{ij}\overline{\nu(gh^c)_{st}}&=\sum_{l=1}^{d_\mu}\sum_{l'=1}^{d_\nu}\frac{1}{|G|}\mu(h^b)_{lj}\overline{\nu(h^c)_{l't}}\sum_{g\in G}\mu(g)_{il}\overline{\nu(g)_{sl'}}\\
    &=\frac{\delta_{\mu\nu}\delta_{is}}{d_\mu}\sum_{l=1}^{d_\mu}\mu(h^b)_{lj}\overline{\mu(h^c)_{lt}}\\
    &=\frac{\delta_{\mu\nu}\delta_{is}}{d_\mu}\mu(h^{b-c})_{tj}
\end{align}
and therefore \cref{eq: state before orthogonality} becomes (after adjusting the basis to drop the transpose on $\mu(h^{b-c})$)
\begin{align}
    \tilde{\rho}_h&=\frac{1}{m|G|}\bigoplus_{\mu\in\hat{G}}\sum_{b,c\in\Z/m}\ket{\mu,b}\bra{\mu,c}\otimes\sum_{i,j,t=1}^{d_\mu}\mu(h^{b-c})_{tj}\ket{i,j}\bra{i,t}\\
    &\cong\frac{1}{m|G|}\bigoplus_{\mu\in\hat{G}}\sum_{b,c\in\Z/m}\ket{\mu,b}\bra{\mu,c}\otimes\1_{d_\mu}\otimes\mu(h^{b-c})\\
    &\cong\frac{1}{|G|}\bigoplus_{\mu\in\hat{G}}\1_{d_\mu}\otimes\tau_\mu(h)
\end{align}
where 
\begin{align}
    \tau_{\mu}(h):=\frac{1}{m}\sum_{b,c\in\Z/m}\ket{b}\bra{c}\otimes\mu(h^{b-c}). 
\end{align}
We can further rewrite
\begin{align}
    \tilde{\rho}_h&=\frac{1}{m|G|}\sum_{b,c\in\Z/m}\ket{b}\bra{c}\otimes\reg_G(h^{b-c}).
\end{align}
Note that 
\begin{align}
    (\1\otimes\reg_G(g))\tilde{\rho}_h(\1\otimes\reg_G(g))^\dagger=\tilde{\rho}_{ghg^{-1}},
\end{align}
therefore $\{\tilde{\rho}_h:h\in G\}$ is a compound geometrically uniform (CGU) ensemble \cite{Eldar2004OptimalDetection, zhou2025geometrically}, with generators in one to one correspondence with the conjugacy classes of $G$. Finally,
\begin{align}
\tilde{\rho}_h^{\otimes k}
\;=\;
\bigoplus_{\vec{\mu}\in(\wh G)^k}
\left(\frac{1}{|G|^k}\bigotimes_{j=1}^k \1_{d_{\mu_j}}\right)\otimes\tau_{\vec{\mu}}(h)
\end{align}
and the PGM operators of the ensemble $\{\tilde{\rho}_h^{\otimes k}:h\in G\}$ are
\begin{align}
E_h=S^{-1/2}\Big(\tfrac{1}{|G|}\tilde{\rho}_h^{\otimes k}\Big)S^{-1/2}=\bigoplus_{\vec{\mu}}\big(\frac{1}{|G|}\bigotimes_{j=1}^k \1_{d_{\mu_j}}\big)\otimes
\Big(\overline{\tau}_{\vec{\mu}}^{-1/2}\,\tau_{\vec{\mu}}(h)\,\overline{\tau}_{\vec{\mu}}^{-1/2}\Big)
\end{align}
where the average state
\begin{align}
S:=\frac{1}{|G|}\sum_{h\in G}\tilde{\rho}_h^{\otimes k}=\bigoplus_{\vec{\mu}}(\frac{1}{|G|^k}\bigotimes_{j=1}^k\1_{d_{\mu_j}})\otimes\overline{\tau}_{\vec{\mu}}.
\end{align}

\section{Lower and upper bounds for sample complexity}
\subsection{Sample complexity of state discrimination}
Given ensemble $\cE=(p_i,\rho_i)_{i=1}^{n}$, consider the $k-$copy i.i.d. state ensemble $\cE_k:=(p_i,\rho_i^{\otimes k})_{i=1}^n$. The sample complexity of state discrimination given constant $\epsilon>0$ is defined to be
\begin{align}
    k_{\min}(\cE,\epsilon):=\min\{k:\psucc^*(\cE_k)\geq\epsilon\}.
\end{align}
Note that if we only care about the asymptotic growth of $k_{\min}$ in terms of $n$, then the choice of $\epsilon$ does not affect the order. In the rest of this article, we may drop the error threshold $\epsilon$ and simply consider
\begin{align}
    k_{\min}(\cE):=\min\{k:\psucc^*(\cE_k)\geq\frac{2}{3}\}.
\end{align}
We first review the upper bound of $k_{\min}$ given by Barnum-Knill's pairwise-fidelity bound \cite{BarnumKnill2002}:

\begin{lem}[Barnum-Knill \cite{BarnumKnill2002}]\label{lemma: Barnum-Knill}
    Given an ensemble $\mathcal{E}:=\{(p_i,\rho_i)\}$,
    \begin{align}
        1-p_{\text{PGM}}\leq\sum_{i\neq j}\sqrt{p_ip_j}F(\rho_i,\rho_j).
    \end{align}
\end{lem}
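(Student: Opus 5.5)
The plan is to work throughout with the \emph{block Gram matrix} of the ensemble, as in \cite{BarnumKnill2002}. Here $F(\rho,\sigma)=\|\sqrt{\rho}\sqrt{\sigma}\|_1$ denotes the fidelity, and the underlying space is restricted to the support of $\overline{\rho}$.

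\textbf{Step 1: express $p_{\text{PGM}}$ through the block Gram matrix.} Set $B_i:=\sqrt{p_i\rho_i}$ and let $V:=\sum_i \bra{i}\otimes B_i$ be the map from $\C^N\otimes\cH$ to $\cH$. Then $VV^\dagger=\overline{\rho}$, and $\Gamma:=V^\dagger V$ has operator blocks $\Gamma_{ij}=B_iB_j$. Using the polar decomposition of $V$, the PGM \eqref{eq:pgm} gives
\begin{align}
p_{\text{PGM}}=\sum_i\tr\big(B_i\overline{\rho}^{-1/2}B_i\big)^2=\sum_i\big\|(\sqrt{\Gamma})_{ii}\big\|_2^2,
\end{align}
where $(\sqrt{\Gamma})_{ii}$ is the $i$-th diagonal block of $\sqrt{\Gamma}$. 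Since $\sum_i\tr\Gamma_{ii}=\sum_i p_i=1$, this yields
\begin{align}
1-p_{\text{PGM}}=\sum_{i}\Big(\tr\Gamma_{ii}-\big\|(\sqrt{\Gamma})_{ii}\big\|_2^2\Big).
\end{align}

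\textbf{Step 2: linearize the quadratic term.} Let $D:=\bigoplus_i p_i\1$. For each $i$, expand $\|(\sqrt{\Gamma})_{ii}-X_i\|_2^2\geq 0$ with a comparison block $X_i$ whose Hilbert--Schmidt norm squared is at most $\tr\Gamma_{ii}=p_i$. This is the operator analogue of $x^2\ge 2\sqrt{p_i}\,x-p_i$. It gives
\begin{align}
\big\|(\sqrt{\Gamma})_{ii}\big\|_2^2\ge 2\,\mathrm{Re}\tr\big(X_i^\dagger(\sqrt{\Gamma})_{ii}\big)-p_i .
\end{align}
Summing over $i$,
\begin{align}
1-p_{\text{PGM}}\le 2\Big(1-\mathrm{Re}\tr\big(X^\dagger\sqrt{\Gamma}\big)\Big),\qquad X:=\bigoplus_i X_i .
\end{align}
The natural choice is $X_i=B_i\,U_i$ for a suitable unitary $U_i$, so that $X^\dagger X$ reproduces the diagonal part $\Gamma_{\mathrm{diag}}:=\bigoplus_i\Gamma_{ii}$.

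\textbf{Step 3: the main obstacle.} The remaining task is to show
\begin{align}
\mathrm{Re}\tr\big(X^\dagger\sqrt{\Gamma}\big)\ \ge\ 1-\tfrac12\sum_{i\neq j}\|\Gamma_{ij}\|_1,
\end{align}
which suffices because $\|\Gamma_{ij}\|_1=\|\sqrt{p_i\rho_i}\sqrt{p_j\rho_j}\|_1=\sqrt{p_ip_j}\,F(\rho_i,\rho_j)$.

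I expect this to be the hardest step, since the square root is only operator concave and is not subadditive in the operator order. My first attempt is to write $\Gamma=\Gamma_{\mathrm{diag}}+\Delta$, with $\Delta$ the off-diagonal part, and use the pinching inequality.

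\emph{Pure states.} The blocks are scalars and $\Gamma$ is an ordinary Gram matrix $G$. Here I would use $\sqrt{G}\ge G\,\|G\|^{-1/2}$ together with the Cauchy--Schwarz argument of Barnum--Knill in the form
\begin{align}
\Big(\sum_i\sqrt{p_i}\,(\sqrt{G})_{ii}\Big)^2\ \ge\ \frac{\big(\tr D\big)^2}{\tr\big(D^{1/2}GD^{1/2}\big)/\tr D}.
\end{align}
Bounding the off-diagonal mass by $\sum_{i\ne j}|G_{ij}|$ should then give the claim.

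\emph{Mixed states.} Rather than decomposing each $\rho_i$ into pure components, which would introduce spurious within-$i$ cross terms, I would keep the operator blocks. I would replace $|G_{ij}|$ by $\|\Gamma_{ij}\|_1$ and absorb the unitaries $U_i$ from the polar decompositions of the $\Gamma_{ij}$, so that each trace estimate becomes a trace-norm estimate. If this direct route fails, the fallback is to choose the $U_i$ to make the relevant cross terms positive and then apply H\"older's inequality blockwise.
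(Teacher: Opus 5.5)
The paper gives no proof of this lemma; it is quoted from \cite{BarnumKnill2002}. Your argument therefore has to stand on its own, and it has a genuine gap.

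Steps~1 and~2 are correct, but they only reformulate the claim. Indeed $1-p_{\text{PGM}}=\sum_{i\neq j}\|(\sqrt\Gamma)_{ij}\|_2^2$, so the whole task is to compare the off-diagonal blocks of $\sqrt\Gamma$ with those of $\Gamma$. All of that content sits in Step~3, which you do not prove, and the tools you name for it fail:
\begin{itemize}
\item \emph{Pinching.} Pinching combined with operator concavity of $\sqrt{\cdot}$ gives $(\sqrt\Gamma)_{ii}\le\sqrt{\Gamma_{ii}}$. That is an \emph{upper} bound on the diagonal blocks, whereas you need a lower bound on $\mathrm{Re}\tr(X^\dagger\sqrt\Gamma)$.
\item \emph{Your displayed Cauchy--Schwarz inequality is false.} Since $\tr D=1$ and $\tr(D^{1/2}GD^{1/2})=\sum_ip_i^2$, its right-hand side equals $1/\sum_ip_i^2\ge1$. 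Its left-hand side is at most $\sum_i(\sqrt G)_{ii}^2=p_{\text{PGM}}\le1$. For $N$ orthogonal states with uniform prior it asserts $1\ge N$.
\item \emph{The estimate $\sqrt G\ge G\|G\|^{-1/2}$ is too lossy, even in trivial cases.} For two orthogonal states with priors $(0.9,0.1)$ it gives $\tr(D^{1/2}\sqrt G)\ge 0.9+0.1/3<1$, while Step~3 demands $\ge1$. For uniform priors it yields only $p_{\text{PGM}}\ge 1/\|C\|$, with $C$ the normalized Gram matrix. This can fall far below $1-\frac1N\sum_{i\neq j}|C_{ij}|$, e.g.\ when one state overlaps many mutually orthogonal ones.
\item \emph{Mixed case.} A block-diagonal $X$ carries a single $U_i$ per index. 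It cannot simultaneously match the polar parts of all $\Gamma_{ij}$, $j\neq i$.
\end{itemize}

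The missing idea is a reduction to two-operator inequalities plus subadditivity.
\begin{itemize}
\item \emph{Reduction.} Write $1-p_{\text{PGM}}=\sum_i\tr\big(A_i\overline\rho^{-1/2}R_i\overline\rho^{-1/2}\big)$ with $A_i=p_i\rho_i$ and $R_i=\overline\rho-A_i$.
\item \emph{Setup for one pair.} For $A+R=\overline\rho$ (working on the support), set $T=\overline\rho^{-1/2}A\overline\rho^{-1/2}$, $P=T^{1/2}$, $Q=(\1-T)^{1/2}$, $\Sigma=\overline\rho^{1/2}$. Then $\tr(A\overline\rho^{-1/2}R\overline\rho^{-1/2})=\|P\Sigma Q\|_2^2$. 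Since $A=(\Sigma P)(\Sigma P)^\dagger$ and $R=(\Sigma Q)(\Sigma Q)^\dagger$, also $\|\sqrt A\sqrt R\|_1=\|P\overline\rho Q\|_1$.
\item \emph{Key estimate.} Let $a=P\Sigma P$, $b=P\Sigma Q=W|b|$ with $W$ unitary, and $c=Q\Sigma Q$. Using $P^2+Q^2=\1$ one has $P\overline\rho Q=ab+bc$, and $\begin{pmatrix}a&b\\ b^\dagger&c\end{pmatrix}\ge0$. Hence $W^\dagger aW+c\ge2|b|$, and so $\|ab+bc\|_1\ge\tr\big((W^\dagger aW+c)|b|\big)\ge2\|b\|_2^2$.
\item \emph{Subadditivity.} $\|\sqrt{A_i}\sqrt{R_i}\|_1=\tr\sqrt{\sum_{j\neq i}\sqrt{A_i}A_j\sqrt{A_i}}\le\sum_{j\neq i}\|\sqrt{A_i}\sqrt{A_j}\|_1$, by subadditivity of $\tr\sqrt{\cdot}$ on positive operators.
\end{itemize}
Summing over $i$ gives the lemma, in fact with an extra factor $\tfrac12$.
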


\begin{cor}[Montanaro \cite{Montanaro2019PrettySimpleBounds}]\label{cor: trivial upper bound of k_min}
    Suppose $p_i=\frac{1}{n}$. Let $\mu(\mathcal{E}):=\max_{i\neq j}F(\rho_i,\rho_j)<1$, then
    \begin{align}
        k_{\min}\leq O(\frac{\log n}{-\log\mu}).
    \end{align}
\end{cor}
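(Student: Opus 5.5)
We prove the corollary by applying \cref{lemma: Barnum-Knill} to the $k$-copy ensemble $\cE_k=(\frac1n,\rho_i^{\otimes k})_{i=1}^n$, and then choosing $k$ so that the resulting error bound is at most $\frac13$. Since $\psucc^*(\cE_k)\geq p_{\text{PGM}}(\cE_k)$, this gives $\psucc^*(\cE_k)\geq\frac23$, which is exactly the condition defining $k_{\min}$.

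First, recall that fidelity is multiplicative under tensor products:
\begin{align}
F(\rho_i^{\otimes k},\rho_j^{\otimes k})=F(\rho_i,\rho_j)^k .
\end{align}
This holds for either convention, $F=\|\sqrt{\rho}\sqrt{\sigma}\|_1$ or its square. The reason is that $\sqrt{\rho^{\otimes k}}=(\sqrt\rho)^{\otimes k}$, and the trace norm is multiplicative on tensor products. Consequently $F(\rho_i^{\otimes k},\rho_j^{\otimes k})\leq\mu^k$ for all $i\neq j$, where $\mu=\mu(\mathcal{E})$.

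Next, plug this into \cref{lemma: Barnum-Knill} with $p_i=\frac1n$. Each of the $n(n-1)$ ordered pairs contributes $\sqrt{p_ip_j}=\frac1n$ times a fidelity of at most $\mu^k$, so
\begin{align}
1-p_{\text{PGM}}(\cE_k)\leq\sum_{i\neq j}\frac1n\,\mu^k=(n-1)\mu^k\leq n\mu^k .
\end{align}

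Finally, choose $k$ to make the right-hand side at most $\frac13$. Because $\mu<1$, we have $-\log\mu>0$, and the condition $n\mu^k\leq\frac13$ is equivalent to
\begin{align}
k\geq\frac{\log(3n)}{-\log\mu}.
\end{align}
Take $k=\left\lceil\frac{\log(3n)}{-\log\mu}\right\rceil$. Then $p_{\text{PGM}}(\cE_k)\geq\frac23$, and hence $\psucc^*(\cE_k)\geq\frac23$. It follows that
\begin{align}
k_{\min}\leq\left\lceil\frac{\log(3n)}{-\log\mu}\right\rceil=O\!\left(\frac{\log n}{-\log\mu}\right).
\end{align}

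The only step needing care is the last one. When $-\log\mu$ is large, the ceiling forces $k\geq1$, and the additive constant $\log 3$ can dominate $\log n$ for small $n$. For $n\geq2$, however, $\log(3n)\leq 3\log n$, so $\frac{\log(3n)}{-\log\mu}\leq\frac{3\log n}{-\log\mu}$. The extra $1$ from the ceiling contributes only $O(1)$, which is the constant-order term allowed in the asymptotic statement. Everything else is a direct substitution.
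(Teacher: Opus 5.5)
Your proposal is correct and follows essentially the same route as the paper: apply \cref{lemma: Barnum-Knill} to the $k$-copy ensemble, use multiplicativity of fidelity to get $1-p_{\text{PGM}}\leq(n-1)\mu^k$, and then solve $n\mu^k\leq\text{const}$ for $k$. Your error target of $\frac13$ (the paper uses $\frac12$) matches the $\frac23$ threshold in the definition of $k_{\min}$, and your treatment of the ceiling and the $\log 3$ term is harmless extra care.
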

\begin{proof}
    Since fidelity is multiplicative,
    \begin{align}
        1-p_k\leq\frac{1}{n}\sum_{i\neq j}F(\rho_i,\rho_j)^k\leq\frac{1}{n}\sum_{i\neq j}\mu^k=(n-1)\mu^k.
    \end{align}
    Therefore, with $n\mu^k\leq\frac{1}{2}$ it suffices to make $p_k\geq\frac{1}{2}$, i.e. $k_{\min}\leq O(\frac{\log n}{-\log\mu})$.
\end{proof}

However, pairwise overlaps can still be difficult to compute. As an alternative, we will use the Holevo-Curlander estimate which gives a sandwiched bound on the optimal success probability by a trace quantity, therefore simplifying the expression of sample complexity as well:
\begin{lem}[generalized Holevo-Curlander \cite{ogawa1999strong,tyson2010two-sided,tyson2009two-sided}]\label{lemma: generalized Holevo-Curlander}
    The optimal success probability $\psucc^*$ of a \emph{generic} quantum state ensemble $(p_i,\rho_i)_{i\in [n]}$ satisfies
\begin{align}
    \left[ \tr\sqrt{ \sum\nolimits_{i=1}^n p_i^2 \rho_i^2} \right]^2 \leq \psucc^* \leq \tr\sqrt{\sum\nolimits_{i=1}^n p_i^2 \rho_i^2}.
    \label{eq:holevo-curlander}
\end{align}
\end{lem}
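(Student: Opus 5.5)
We prove the two inequalities in \cref{lemma: generalized Holevo-Curlander} separately. Throughout write $A_i := p_i\rho_i$, $T := \sum_i A_i^2 \geq 0$ and $\Lambda := T^{1/2}$. As noted after \eqref{eq:pgm}, we may restrict to the support of $\overline{\rho}$. On that support $T$ is invertible: $\sum_i A_i^2 v = 0$ forces $A_i v = 0$ for every $i$, hence $\overline{\rho}v=0$. The quantity to compare with $\psucc^*$ is $\tr\Lambda$.

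\textbf{Upper bound.} We use the dual program \eqref{eq:dual}, in which $\psucc^* = \min\{\tr K : K \geq A_i \ \forall i\}$, but it is easier to bound the primal directly.
\begin{enumerate}
\item Take any POVM $\{M_i\}$. By Cauchy--Schwarz for the Hilbert--Schmidt inner product,
\begin{align}
\sum_i \tr(M_i A_i) &= \sum_i \tr\big( (M_i^{1/2}\Lambda^{1/2}) (\Lambda^{-1/2} A_i M_i^{1/2}) \big)\\
&\leq \Big(\sum_i \tr(M_i\Lambda)\Big)^{1/2}\Big(\sum_i \tr(M_i A_i\Lambda^{-1}A_i)\Big)^{1/2}.
\end{align}
\item The first factor equals $(\tr\Lambda)^{1/2}$, because $\sum_i M_i = \1$.
\item For the second factor, use $M_i \leq \1$ to get $\tr(M_i A_i\Lambda^{-1}A_i) \leq \tr(A_i\Lambda^{-1}A_i) = \tr(\Lambda^{-1}A_i^2)$. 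Summing over $i$ gives $\tr(\Lambda^{-1}T) = \tr\Lambda$.
\end{enumerate}
Hence $\sum_i\tr(M_iA_i) \leq \tr\Lambda$, and maximizing over POVMs gives the upper bound.

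\textbf{Lower bound.} Here we exhibit a specific measurement, the ``quadratic'' analogue of the PGM:
\begin{align}
M_i := \Lambda^{-1} A_i^2 \Lambda^{-1}.
\end{align}
Each $M_i$ is positive, and $\sum_i M_i = \Lambda^{-1}T\Lambda^{-1} = \1$, so this is a valid POVM. Its success probability is
\begin{align}
\sum_i \tr(\Lambda^{-1}A_i^2\Lambda^{-1}A_i).
\end{align}
We need to show this is at least $(\tr\Lambda)^2$. Apply Cauchy--Schwarz in the form $\big(\sum_i \tr(X_i^\dagger Y_i)\big)^2 \leq \sum_i\tr(X_i^\dagger X_i)\sum_i\tr(Y_i^\dagger Y_i)$.
\begin{enumerate}
\item Choose the operators so that the left-hand side is $(\tr\Lambda)^2$, using $\tr\Lambda = \tr(\Lambda^{-1}T) = \sum_i \tr(\Lambda^{-1}A_i^2)$.
\item Split $\tr(\Lambda^{-1}A_i^2) = \tr\big( (A_i^{1/2}\Lambda^{-1}A_i^{3/2}) \cdot (A_i^{1/2}) \big)$ by cyclicity, i.e. take $X_i^\dagger = A_i^{1/2}\Lambda^{-1}A_i^{3/2}$ and $Y_i = A_i^{1/2}$.
\item Then $\sum_i\tr(Y_i^\dagger Y_i) = \sum_i \tr A_i = \sum_i p_i = 1$.
\item The other factor is
\begin{align}
\sum_i\tr(X_i^\dagger X_i) = \sum_i\tr\big(A_i^{1/2}\Lambda^{-1}A_i^{3}\Lambda^{-1}A_i^{1/2}\big) = \sum_i\tr(\Lambda^{-1}A_i^3\Lambda^{-1}A_i).
\end{align}
\end{enumerate}
This is not literally the success probability of our POVM. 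It has $A_i^3$ and $A_i$ where the POVM expression has $A_i^2$ and $A_i^2$ in cyclic order.

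\textbf{Main obstacle.} The hard step is arranging the Cauchy--Schwarz split in the lower bound so that the second factor is exactly $\sum_i\tr(M_iA_i)$. The fix I would try first is to write $\tr(\Lambda^{-1}A_i^2) = \tr\big( (A_i\Lambda^{-1}A_i^{1/2})\cdot(A_i^{1/2}) \big)$, i.e. $X_i^\dagger = A_i\Lambda^{-1}A_i^{1/2}$ and $Y_i = A_i^{1/2}$. Then
\begin{align}
\sum_i\tr(X_i^\dagger X_i) = \sum_i\tr\big(A_i\Lambda^{-1}A_i\Lambda^{-1}A_i\big) = \sum_i \tr(\Lambda^{-1}A_i^2\Lambda^{-1}A_i),
\end{align}
which matches the success probability of $M_i = \Lambda^{-1}A_i^2\Lambda^{-1}$ by cyclicity. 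Combined with $\sum_i\tr(Y_i^\dagger Y_i) = 1$, this gives
\begin{align}
(\tr\Lambda)^2 \leq \sum_i\tr(M_iA_i) \leq \psucc^*.
\end{align}
Careful bookkeeping of which square root sits on which side is the crux. If this split fails, the fallback is to follow Tyson's argument in \cite{tyson2009two-sided}.
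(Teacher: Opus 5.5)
Your proof is correct. The paper gives no proof of this lemma; it only cites Ogawa--Nagaoka and Tyson. So your argument provides a self-contained proof where the paper has only a citation.

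Your lower bound uses the quadratically weighted measurement $M_i=\Lambda^{-1}A_i^2\Lambda^{-1}$ from the cited work of Tyson, together with Cauchy--Schwarz on $\tr\Lambda=\sum_i\tr\bigl((A_i\Lambda^{-1}A_i^{1/2})A_i^{1/2}\bigr)$. Your second split does work, because $\tr(A_i\Lambda^{-1}A_i\Lambda^{-1}A_i)=\tr(\Lambda^{-1}A_i^2\Lambda^{-1}A_i)$ by cyclicity. The ``fallback'' hedge is therefore unnecessary and should be removed. You should also delete the first attempted split, because the identity claimed there is false: $\tr(A_i^{1/2}\Lambda^{-1}A_i^{3/2}A_i^{1/2})=\tr(\Lambda^{-1}A_i^{5/2})$, not $\tr(\Lambda^{-1}A_i^2)$.

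For the upper bound you stay in the primal, using a Cauchy--Schwarz split together with $M_i\le\1$. This relies only on elementary trace inequalities, but it needs $\Lambda^{-1/2}$ and hence the reduction to the support of $\overline{\rho}$, which you justified correctly.

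A shorter alternative lets the dual program \eqref{eq:dual}, already stated in the paper, do the work. From $\Lambda^2=T\ge A_i^2$ and operator monotonicity of $x\mapsto\sqrt{x}$, you get $\Lambda\ge A_i$ for every $i$. Then $K=\Lambda$ is dual-feasible, so $\psucc^*\le\tr\Lambda$ immediately. This version needs no inverse and shows directly why $\tr\Lambda$ is an upper bound: it is the value of a feasible dual point. The cost is invoking the L\"owner--Heinz theorem, which your primal argument avoids.
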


In particular, the asymptotic growth of $k_{\min}$ is of the same order as the minimum $k$ such that $q_k:=\tr\sqrt{\sum_i p_i^2(\rho_i^{\otimes k})^2}\geq\epsilon$. However, the operator $\rho_i^2$ still involves complicated spectral information in general, so we consider the following simple case where each $\rho_i$ is a normalized projector:

\begin{prop}\label{prop: uniform projectors}
    Suppose $p_i=\frac{1}{n}$ and $\rho_i=\frac{1}{r}P_i$ where $P_i=W_iW_i^\dagger$ is projection with rank $r$, i.e. $W_i$ has $r$ orthonormal columns. Let $B_k:=\begin{bmatrix}
        (\frac{1}{r}W_1)^{\otimes k} & ... & (\frac{1}{r}W_n)^{\otimes k}
    \end{bmatrix}$ and $G_k:=B_k^\dagger B_k$, then
    \begin{align}
        q_k:=\tr\sqrt{\sum_ip_i^2(\rho_i^2)^{\otimes k}}=\frac{1}{n}||B_k||_1=\frac{1}{n}\tr\sqrt{G_k}
    \end{align}
    where $||X||_1:=\tr\sqrt{X^\dagger X}=\tr\sqrt{XX^\dagger}$ is the nuclear norm.
\end{prop}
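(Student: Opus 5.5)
The plan is to write the operator under the square root as $B_kB_k^\dagger$ and then use the basic identities for the nuclear norm.

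First, since $P_i$ is a projection, $\rho_i^2=\frac{1}{r^2}P_i=\frac{1}{r^2}W_iW_i^\dagger$. Hence
\begin{align}
(\rho_i^2)^{\otimes k}=\Big(\tfrac{1}{r}W_i\Big)^{\otimes k}\Big(\big(\tfrac{1}{r}W_i\big)^{\otimes k}\Big)^\dagger ,
\end{align}
using $(X^{\otimes k})^\dagger=(X^\dagger)^{\otimes k}$ and $(XY)^{\otimes k}=X^{\otimes k}Y^{\otimes k}$. Summing over $i$ with weights $p_i^2=\frac{1}{n^2}$ gives
\begin{align}
\sum_i p_i^2(\rho_i^2)^{\otimes k}=\frac{1}{n^2}\sum_i \Big(\tfrac{1}{r}W_i\Big)^{\otimes k}\Big(\big(\tfrac{1}{r}W_i\big)^{\otimes k}\Big)^\dagger=\frac{1}{n^2}B_kB_k^\dagger .
\end{align}
The last equality is block-matrix multiplication of the row of blocks $B_k$ with its adjoint, which is a column of blocks. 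Note that $(\rho_i^{\otimes k})^2=(\rho_i^2)^{\otimes k}$, so this $q_k$ agrees with the one defined after \cref{lemma: generalized Holevo-Curlander}.

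Next, since $B_kB_k^\dagger\ge 0$ and $\sqrt{c^2X}=c\sqrt{X}$ for $c\ge0$, we get
\begin{align}
q_k=\frac{1}{n}\tr\sqrt{B_kB_k^\dagger}=\frac{1}{n}\|B_k\|_1 .
\end{align}
Finally, $\tr\sqrt{XX^\dagger}=\tr\sqrt{X^\dagger X}$ because $XX^\dagger$ and $X^\dagger X$ have the same nonzero eigenvalues, namely the squared singular values of $X$. Applying this to $X=B_k$ gives $\|B_k\|_1=\tr\sqrt{G_k}$.

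There is no real obstacle here. The only points needing care are the bookkeeping of the Kronecker/tensor structure in the block product and the use of the projection property $P_i^2=P_i$.
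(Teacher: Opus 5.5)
Your proof is correct and matches the paper's argument: write $\rho_i^2=(\frac{1}{r}W_i)(\frac{1}{r}W_i)^\dagger$, recognize $\sum_i p_i^2(\rho_i^2)^{\otimes k}=\frac{1}{n^2}B_kB_k^\dagger$ by block multiplication, and pull the factor $\frac{1}{n}$ out of the square root. You also justify the last equality $\|B_k\|_1=\tr\sqrt{G_k}$ using the common nonzero spectrum of $B_kB_k^\dagger$ and $B_k^\dagger B_k$, which the paper leaves implicit in its definition of the nuclear norm.
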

\begin{proof}
    Since $\rho_i^2=\frac{1}{r^2}P_i=(\frac{1}{r}W_i)(\frac{1}{r}W_i^\dagger)$, we have
    \begin{align}
        q_k=\frac{1}{n}\tr\sqrt{\sum_i(\frac{1}{r}W_i)^{\otimes k}(\frac{1}{r}W_i^\dagger)^{\otimes k}}=\frac{1}{n}\tr\sqrt{B_kB_k^\dagger}=\frac{1}{n}||B_k||_1.
    \end{align}
\end{proof}

A simple reformulation of Barnum-Knill estimate under this setup:
\begin{prop} \label{prop: Holevo-Curlander formulation of Barnum-Knill}
    \begin{align}\label{eq: Barnum-Knill estimate}
        q_k\geq\frac{1}{\sqrt{1+\frac{1}{n}\sum_{i\neq j}(\frac{\tr(P_iP_j)}{r})^k}}.    
    \end{align}
    In particular, if $\mu:=\max_{i\neq j}\frac{\tr(P_iP_j)}{r}<1$, then
    \begin{align}
        k_{\min}\leq O(-\frac{\log n}{\log\mu}).
    \end{align}
\end{prop}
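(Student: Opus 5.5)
We will bound $\tr\sqrt{G_k}$ from below using only its diagonal and the Frobenius norm of $G_k$. By Proposition~\ref{prop: uniform projectors}, $q_k=\frac{1}{n}\tr\sqrt{G_k}$, where $G_k=B_k^\dagger B_k$ is positive semidefinite with blocks $(G_k)_{ij}=\frac{1}{r^{2k}}(W_i^\dagger W_j)^{\otimes k}$. Writing $\lambda_\ell\geq 0$ for the eigenvalues of $G_k$, we have
\begin{align}
\tr G_k=\sum_\ell\lambda_\ell=\sum_i\frac{1}{r^{2k}}\tr(\1_r)^{\otimes k}=\frac{n}{r^k},
\end{align}
since $W_i^\dagger W_i=\1_r$. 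Next we compute
\begin{align}
\tr G_k^2=\sum_{i,j}\frac{1}{r^{4k}}\tr\big((W_i^\dagger W_jW_j^\dagger W_i)^{\otimes k}\big)=\frac{1}{r^{4k}}\sum_{i,j}\tr(P_iP_j)^k .
\end{align}
The diagonal terms give $n r^k/r^{4k}$, so
\begin{align}
\tr G_k^2=\frac{n}{r^{3k}}\Big(1+\frac{1}{n}\sum_{i\neq j}\big(\tfrac{\tr(P_iP_j)}{r}\big)^k\Big).
\end{align}

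The key step is the interpolation inequality $\big(\sum_\ell\lambda_\ell\big)^{3}\le\big(\sum_\ell\lambda_\ell^{1/2}\big)^2\sum_\ell\lambda_\ell^2$. This is Hölder's inequality applied to $\lambda=\lambda^{2/3}\cdot\lambda^{1/3}$ with exponents $3/2$ and $3$, writing $\lambda^{2/3}=(\lambda^{1/2})^{4/3}$ and $\lambda^{1/3}=(\lambda^{2})^{1/6}$; equivalently, it is log-convexity of $p\mapsto\sum\lambda^p$ between $p=1/2$ and $p=2$. Solving for the nuclear norm gives $\tr\sqrt{G_k}\ge \sqrt{(\tr G_k)^3/\tr G_k^2}$. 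Substituting the two traces,
\begin{align}
\tr\sqrt{G_k}\ge\sqrt{\frac{n^3 r^{-3k}}{n r^{-3k}(1+\cdots)}}=\frac{n}{\sqrt{1+\frac1n\sum_{i\neq j}(\tr(P_iP_j)/r)^k}},
\end{align}
and dividing by $n$ yields \eqref{eq: Barnum-Knill estimate}. Getting the powers of $r$ to cancel exactly is the step most prone to error, and it is what fixes the choice of the $(1/2,1,2)$ interpolation. The block-trace computations themselves are routine, using $\tr(X^{\otimes k})=(\tr X)^k$ and $W_jW_j^\dagger=P_j$.

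For the second statement, we bound $\frac{1}{n}\sum_{i\neq j}(\tr(P_iP_j)/r)^k\le (n-1)\mu^k$. Taking $k\ge \log(c\,n)/(-\log\mu)$ with a suitable constant $c$ makes this at most a small constant $\delta$, so $q_k\ge(1+\delta)^{-1/2}$. Lemma~\ref{lemma: generalized Holevo-Curlander} then gives $\psucc^*\ge q_k^2\ge 1/(1+\delta)$. Choosing $\delta=1/2$ gives $\psucc^*\ge 2/3$, hence $k_{\min}\le O(-\log n/\log\mu)$.
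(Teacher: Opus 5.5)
Your proof is correct and follows the paper's: your eigenvalue inequality $(\sum_\ell\lambda_\ell)^3\le(\sum_\ell\lambda_\ell^{1/2})^2\sum_\ell\lambda_\ell^2$ for $G_k=B_k^\dagger B_k$ is exactly the paper's interpolation $\|B_k\|_1\ge\|B_k\|_2^3/\|B_k\|_4^2$, using the same computations $\|B_k\|_2^2=\tr G_k=n/r^k$ and $\|B_k\|_4^4=\tr G_k^2$, and you also spell out the $k_{\min}$ consequence that the paper leaves implicit. One small slip: as written, your H\"older step pairs the factors the wrong way round; it should pair $\lambda^{1/3}$ with exponent $3/2$ (giving $\sum\lambda^{1/2}$) and $\lambda^{2/3}=(\lambda^2)^{1/3}$ with exponent $3$ (giving $\sum\lambda^2$), though your log-convexity justification is already correct on its own.
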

\begin{proof}
    Since
    \begin{align}\label{eq: interpolation of Schatten norms}
        ||X||_1\geq\frac{||X||_2^3}{||X||_4^2},
    \end{align}
    we have
    \begin{align}
        q_k=\frac{1}{n}||B_k||_1\geq\frac{1}{n}\frac{||B_k||_2^3}{||B_k||_4^2}=\frac{1}{n}\frac{(n/r^k)^\frac{3}{2}}{(\sum_{i,j}\tr(P_iP_j)^k/r^{4k})^\frac{1}{2}}=\frac{1}{\sqrt{1+\frac{1}{n}\sum_{i\neq j}(\frac{\tr(P_iP_j)}{r})^k}}.
    \end{align}
\end{proof}

\iffalse
For convenience, we can further replace the Schatten-$\frac{1}{2}$ norm by the 1-norm:
\begin{lem}\label{lemma: kmin by 1-norm}
    Let
    \begin{align}
        s_k:=\frac{r^k||X_k-\frac{1}{r^{2k}}\1||_1}{2n}, 
    \end{align}
    then 
    \begin{align}
        1-s_k\leq\frac{1}{n}\tr\sqrt{X_k}\leq\sqrt{1-s_k^2}.
    \end{align}
    As a result,
    \begin{align}
        (1-s_k)^2\leq p_k^*\leq\sqrt{1-s_k^2}
    \end{align}
    and the sample complexity problem reduces to estimating
    \begin{align}
        k_{\min}:=\min\{k:s_k\leq\epsilon\}.
    \end{align}
\end{lem}
\begin{proof}
    Let $\rho_k:=\frac{r^k}{n}X$ and $\sigma_k:=\frac{1}{nr^k}\1$, then $F(\rho_k,\sigma_k)=\frac{1}{n}\tr\sqrt{X_k}$ and $||\rho_k-\sigma_k||_1=\frac{r^k}{n}||Y_k||_1=s_k$. By Fuchs–van de Graaf inequalities
    \begin{align}
        1-F(\rho_k,\sigma_k)\leq\frac{1}{2}||\rho_k-\sigma_k||_1\leq\sqrt{1-F(\rho_k,\sigma_k)^2}
    \end{align}
    we get
    \begin{align}
        1-s_k\leq\frac{1}{n}\tr\sqrt{X_k}\leq\sqrt{1-s_k^2}.
    \end{align}
\end{proof}
\fi

\subsection{Sample complexity of discriminating the generalized hidden shift states}
For simplicity, we consider the case where $m=|G|$.

\subsubsection{Upper bound}
Take a purification of $\tilde{\rho}_h$
\begin{align}
    \ket{\Phi_h}:=\frac{1}{\sqrt{m|G|}}\sum_{b\in\Z/m}\ket{b}\otimes\reg_G(h^b)\ket{\Omega},
\end{align}
then by Uhlmann's theorem
\begin{align}
    F(\tilde{\rho}_g,\tilde{\rho}_h)&=\max_{U_R}\bra{\Phi_g}(\1\otimes U_R)\ket{\Phi_h}\\
    &=\frac{1}{m|G|}\max_{U_R}\tr[U_R\sum_{b\in\Z/m}\reg_G(h^bg^{-b})]\\
    &=\frac{1}{m|G|}||\sum_{b\in\Z/m}\reg_G(h^bg^{-b})||_1.
\end{align}

\begin{prop}
    Define 
    \begin{align}
        r(g,h):=\frac{|\{b\in\Z/|G|:h^b=g^b\}|}{|G|}=\frac{1}{\min\{b>0:g^b=h^b\}}.
    \end{align}
    Then
    \begin{align}
        r(g,h)\leq F(\tilde{\rho}_g,\tilde{\rho}_h)\leq\sqrt{r(g,h)}.
    \end{align}
\end{prop}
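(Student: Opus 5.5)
We start from the Uhlmann formula derived just above,
\begin{align}
F(\tilde{\rho}_g,\tilde{\rho}_h)=\frac{1}{m|G|}\Big\|\sum_{b\in\Z/m}\reg_G(h^bg^{-b})\Big\|_1,\qquad m=|G|,
\end{align}
and write $A:=\sum_{b}\reg_G(x_b)$ with $x_b:=h^bg^{-b}$. The plan is to sandwich $\|A\|_1$ between its trace and a Cauchy--Schwarz bound via the Hilbert--Schmidt norm, both of which reduce to counting $b$ with $x_b=e$ via the regular character $\chi_{\reg}(x)=|G|\delta_{xe}$.

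For the lower bound, note $\|A\|_1\geq|\tr A|$, since $\tr(UA)\le\|A\|_1$ with $U=\1$ (or since the max over $U_R$ in the Uhlmann expression includes $U_R=\1$). Then $\tr A=\sum_b\chi_{\reg}(x_b)=|G|\cdot|\{b:h^b=g^b\}|$, so $F\geq |\{b:h^b=g^b\}|/m=r(g,h)$.

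For the upper bound, use $\|A\|_1\le\sqrt{\rank A}\,\|A\|_2\le\sqrt{|G|}\,\|A\|_2$, where $A$ acts on $\C[G]$ of dimension $|G|$. Compute
\begin{align}
\|A\|_2^2=\tr(A^\dagger A)=\sum_{b,c}\chi_{\reg}(x_b^{-1}x_c)=|G|\cdot|\{(b,c):x_b=x_c\}|.
\end{align}
This requires counting pairs with $h^bg^{-b}=h^cg^{-c}$, which is the main obstacle: in nonabelian groups, $x_b=x_c$ is not obviously equivalent to $h^{b-c}=g^{b-c}$. First I would try to show that $|\{(b,c):x_b=x_c\}|\le m\cdot|\{d:h^d=g^d\}|$. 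This gives $\|A\|_2^2\le|G|m^2r$, so $\|A\|_1\le\sqrt{|G|\cdot|G|m^2r}=m|G|\sqrt r$ and hence $F\le\sqrt r$. In the commuting case, $x_b=(hg^{-1})^b$ and the bound is immediate with equality, which I would use as a sanity check. In general I would try to substitute the purification $\reg_G(h^b)\ket{\Omega}$ more carefully, i.e.\ recompute the fidelity as $\frac1{m|G|}\|\sum_b \reg_G(h^b)\ket{\Omega}\bra{\Omega}\reg_G(g^{-b})\|$-type expressions so that the Gram entries depend only on $b-c$. Failing that, I would fall back to bounding directly via $\|\tilde\rho_g^{1/2}\tilde\rho_h^{1/2}\|_1$ using the block form $\tau_\mu$.

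Finally, I would verify the identity between the two expressions for $r(g,h)$. The set $\{b\in\Z/|G|:g^b=h^b\}$ is closed under negation, and when $g,h$ commute (or in the abelian case) it is closed under addition, hence is the subgroup generated by its minimal positive element $b_0$, which divides $|G|$ since $g^{|G|}=h^{|G|}=e$. Its size is therefore $|G|/b_0$, matching the second formula.
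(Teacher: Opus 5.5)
Your plan is the paper's proof: lower-bound $\|X\|_1$ by $|\tr X|$, upper-bound it by $\sqrt{|G|}\,\|X\|_2$, and evaluate both with $\chi_{\reg}(x)=|G|\delta_{xe}$. The lower bound is complete as you wrote it. The upper bound is left unfinished. You call the count $|\{(b,c):x_b=x_c\}|$ the main obstacle, say only that you \emph{would try} to bound it, and sketch fallbacks. The missing step is a one-line identity that holds in every group, with no commutativity between $g$ and $h$ needed:
\begin{align}
h^bg^{-b}=h^cg^{-c}\iff h^{b-c}g^{-b}=g^{-c}\iff h^{b-c}=g^{b-c}
\end{align}
(left-multiply by $h^{-c}$, then right-multiply by $g^{b}$). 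So for each fixed $c$, the number of $b$ with $x_b=x_c$ is exactly $|\{d:h^d=g^d\}|=m\,r(g,h)$. This gives $\|X\|_2^2=|G|\,m^2 r(g,h)=|G|^3 r(g,h)$ with equality, which is precisely the paper's computation, and $F\le\sqrt{r(g,h)}$ follows. The nonabelian case does not break the reduction to $b-c$, and neither fallback (recomputing via the purification or via the $\tau_\mu$ blocks) is needed.

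The same misconception affects your check of the two formulas for $r(g,h)$. You assume commutativity to get closure of $D=\{b:g^b=h^b\}$ under addition, but $g^{a+b}=g^ag^b=h^ah^b=h^{a+b}$ holds in any group, since only powers of a single element are multiplied. Viewed in $\Z$, $D$ is therefore a subgroup containing $|G|\Z$. It equals $b_0\Z$ with $b_0\mid |G|$, and its image in $\Z/|G|$ has $|G|/b_0$ elements. (The paper simply takes this identity as part of the definition.) With these two observations inserted, your argument is complete and coincides with the paper's.
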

\begin{proof}
    Denote $\reg_G$ to be the right regular representation of $G$ and $X:=\sum_{b\in\Z/|G|}\reg_G(h^bg^{-b})$, then
    \begin{align}
        \sqrt{F}(\tilde{\rho}_g,\tilde{\rho}_h)=\frac{1}{|G|^2}||\sum_{b\in\Z/|G|}\reg_G(h^bg^{-b})||_1=\frac{||X||_1}{|G|^2}.
    \end{align}
    Then
    \begin{align}
        ||X||_F^2&=\sum_{b,c\in\Z/|G|}\tr[\reg_G(h^bg^{-b})\reg_G(h^cg^{-c})^\dagger]\\
        &=|G||\{(b,c)\in(\Z/|G|)^2:h^{b-c}=g^{b-c}\}|\\
        &=|G|^3r(g,h),
    \end{align}
    so 
    \begin{align}
        |G|^2r(g,h)=|\tr(X)|\leq||X||_1\leq\sqrt{|G|}||X||_F=|G|^2\sqrt{r(g,h)},
    \end{align}
    therefore
    \begin{align}
        r(g,h)\leq F(\tilde{\rho}_g,\tilde{\rho}_h)\leq\sqrt{r(g,h)}.
    \end{align}
\end{proof}

As a simple consequence of \cref{lemma: Barnum-Knill}, we obtain

\begin{cor}\label{cor: Barnum-Knill estimate}
    \begin{align}
         1-p_{\text{PGM}}\leq\frac{1}{|G|}\sum_{g\neq h}r(g,h)^\frac{k}{2}.
    \end{align}
    Therefore
    \begin{align}
        k_{\min}\leq\min\{k:\frac{1}{|G|}\sum_{g\neq h\in G}r(g,h)^k\leq\epsilon\}.
    \end{align}
\end{cor}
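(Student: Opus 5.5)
The plan is to apply \cref{lemma: Barnum-Knill} to the $k$-copy ensemble $\{\tilde{\rho}_h^{\otimes k}:h\in G\}$ with uniform prior $p_h=\frac{1}{|G|}$, and then insert the fidelity estimate from the preceding proposition. Barnum-Knill gives
\begin{align}
    1-p_{\text{PGM}}\leq\sum_{g\neq h}\frac{1}{|G|}F(\tilde{\rho}_g^{\otimes k},\tilde{\rho}_h^{\otimes k}).
\end{align}
The first step is multiplicativity of fidelity under tensor products, $F(\rho^{\otimes k},\sigma^{\otimes k})=F(\rho,\sigma)^k$, which was already used in the proof of \cref{cor: trivial upper bound of k_min}. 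The second step is the upper bound $F(\tilde{\rho}_g,\tilde{\rho}_h)\leq\sqrt{r(g,h)}$ from the preceding proposition. Raising it to the $k$-th power gives $r(g,h)^{k/2}$, and summing yields the first displayed inequality.

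For the sample complexity statement, the PGM is a valid measurement, so $\psucc^*\geq p_{\text{PGM}}$. Hence any $k$ with $\frac{1}{|G|}\sum_{g\neq h}r(g,h)^{k/2}\leq\epsilon$ already achieves success probability at least $1-\epsilon$. The displayed bound is written with exponent $k$ rather than $k/2$, so I would match it as follows. Let $k'=\min\{k:\frac{1}{|G|}\sum_{g\neq h}r(g,h)^{k}\leq\epsilon\}$. Taking $2k'$ copies then suffices, since $r(g,h)^{2k'/2}=r(g,h)^{k'}$. The factor $2$ is absorbed into the asymptotic statement, so $k_{\min}\leq 2k'=O(k')$. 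The remark after the definition of $k_{\min}$, that the threshold $\epsilon$ does not affect the order, justifies fixing $\epsilon$ so that $1-\epsilon\geq\frac{2}{3}$.

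There is no real obstacle here; the content is the earlier fidelity bound. The only points needing care are the convention for $F$ and the exponent. The paper writes both $F$ and $\sqrt{F}$ in the preceding proof. I would state explicitly that Barnum-Knill is used with the root fidelity $F(\rho,\sigma)=\|\sqrt{\rho}\sqrt{\sigma}\|_1$, which is multiplicative, and that the proposition's bound is for this same quantity. The factor-$2$ discrepancy between $k/2$ and $k$ is then handled by the doubling argument above, interpreting the second inequality up to constants.
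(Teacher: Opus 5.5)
Your argument for the first inequality is the paper's: Barnum--Knill with $p_h=1/|G|$, multiplicativity of the root fidelity, and $F(\tilde\rho_g,\tilde\rho_h)\le\sqrt{r(g,h)}$. You are also right that the Uhlmann computation yields the root fidelity $\|X\|_1/|G|^2$, so the ``$\sqrt{F}$'' in the proposition's proof is a slip.

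The second display is where your proposal falls short. Doubling gives only $k_{\min}\le 2k'$, which is weaker than the stated $k_{\min}\le k'$. This loss cannot be avoided by any argument that goes through fidelities alone. Write $W_h=\frac{1}{\sqrt{m}}\sum_b\ket{b}\otimes\reg_G(h)^b$; this is an isometry, with $\tilde\rho_h=\frac{1}{|G|}W_hW_h^\dagger$. Then $F(\tilde\rho_g,\tilde\rho_h)=\frac{1}{|G|}\|W_g^\dagger W_h\|_1$, where $W_g^\dagger W_h=\frac{1}{|G|}\sum_b\reg_G(g^{-b}h^b)$. For commuting $g,h$ this operator is a projector and $F=r(g,h)$. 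For two distinct transpositions $g,h\in S_3$, however, it equals $\frac12(\1+\reg_G(gh))$, with singular values $1,1,\frac12,\frac12,\frac12,\frac12$. So $F=\frac23>\frac12=r(g,h)$, and Barnum--Knill produces $\sum F^k$, not $\sum r^k$.

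The missing idea is to use the normalized-projector structure together with the generalized Holevo--Curlander bound. This is how the paper justifies the exponent $k$, in the paragraph after the corollary. Here $P_h=W_hW_h^\dagger$ has rank $|G|$, and $\tr(P_gP_h)=\|W_g^\dagger W_h\|_2^2=|G|\,r(g,h)$ by the same Frobenius count as in the proposition, with no Cauchy--Schwarz step. Then \cref{prop: Holevo-Curlander formulation of Barnum-Knill} and \cref{lemma: generalized Holevo-Curlander} give
\[ \psucc^*\ \ge\ q_k^2\ \ge\ \Big(1+\frac{1}{|G|}\sum_{g\neq h}r(g,h)^k\Big)^{-1}. \]
Hence any $k$ with $\frac{1}{|G|}\sum_{g\neq h}r(g,h)^k\le\epsilon\le\frac12$ already gives $\psucc^*\ge\frac23$. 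That is the second display exactly, without your factor $2$. The gain comes from the fact that Holevo--Curlander is governed by $\tr(P_gP_h)/|G|$. This quantity always lies between $F^2$ and $F$, and here it equals $r(g,h)$ exactly.
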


The proof above directly computes the pairwise fidelities and applies Barnum-Knill. As suggested by \cref{prop: Holevo-Curlander formulation of Barnum-Knill}, the generalized Holevo-Curlander inequality can give a simpler proof: Note that $\tilde{\rho}_h=\frac{1}{|G|}W_hW_h^\dagger$ where $W_h:=\frac{1}{\sqrt{m}}\sum_{b\in\Z/m}\ket{b}\otimes\reg_G(h)^b$, and $W_h^\dagger W_h=\1$, therefore 
\begin{align}
    (\tilde{\rho}_h)^2=\frac{1}{|G|^2}W_hW_h^\dagger W_hW_h^\dagger=\frac{1}{|G|}\tilde{\rho}_h.
\end{align}
By \cref{lemma: generalized Holevo-Curlander} and \cref{prop: uniform projectors}, it suffices to estimate the minimum $k$ such that
\begin{align}
    q_k:=\frac{1}{|G|}\tr\sqrt{M_k}\geq\epsilon
\end{align}
where 
\begin{align}
    M_k:=\sum_{h\in G}(\frac{1}{|G|}\tilde{\rho}_h)^{\otimes k}.
\end{align}

    Since $\tilde{\rho}_h=\frac{1}{|G|}W_hW_h^\dagger$ where $\rank(W_h)=\frac{1}{|G|}$, 
    \begin{align}
        \frac{\tr(W_gW_g^\dagger W_hW_h^\dagger)}{|G|}=\frac{|\{b\in\Z/|G|:h^b=g^b\}|}{|G|}=r(g,h).
    \end{align}
    By \cref{eq: Barnum-Knill estimate},
    \begin{align}
        k_{\min}=\min\{k:q_k\geq\epsilon\}\leq\min\{k:\frac{1}{|G|}\sum_{g\neq h\in G}r(g,h)^k\leq\epsilon'\}.
    \end{align}

\iffalse
\begin{ex}
    When $G=(\Z/2)^n$, $r(g,h)=\frac{1}{2}$ for all $g\neq h$, so this bound only gives $k_{\min}=O(n)=O(\log|G|)$.
\end{ex}

\begin{ex}
    Consider only taking $t=3$ in the sum, we have
    \begin{align}
        C_3\geq\frac{1}{|G|}|\{x\in G:\ord(x)=3\}|^2-1.
    \end{align}
    In particular, when $G=S_n$ and $t=p$ is prime \cite{MoserWyman1955, MoserWyman1956, Wilf1986}, 
    \begin{align}
        |\{x\in G:\ord(x)=p\}|\sim\frac{1}{\sqrt{p}}n^{(1-\frac{1}{p})n}\exp(n^{\frac{1}{p}}-n^{1-\frac{1}{p}}),
    \end{align}
    which already gives $\min\{k:\frac{C_3}{3^k}\leq\epsilon\}\geq\Omega(n\log n)$, so this bound only gives $k_{\min}=O(n\log n)=O(\log|G|)$ for $G=S_n$.
\end{ex}
\fi

\subsubsection{Lower bound}
In this subsection denote $m=|G|$. Let $\ket{f_t}$ be the Fourier basis over $\Z/m$ such that $S_c\ket{f_t}=\omega_m^{tc}\ket{f_t}$ where $S_c:=\sum_{b\in\Z/m}\ket{b}\bra{b-c}$, then 
\begin{align}\label{eq: rho_h after DFT}
    \tilde{\rho}_h\cong\frac{1}{m}\bigoplus_{t\in\Z/m}\ket{f_t}\bra{f_t}\otimes F_t(h)\cong\frac{1}{m}\bigoplus_{t\in\Z/m}F_t(h)
\end{align}
where 
\begin{align}
    F_t(h):=\frac{1}{|G|}\sum_{b\in\Z/m}\omega_m^{tb}\reg_G(h)^b=\frac{1}{|G|}\sum_{r\in\Z/x_h}(\sum_{s\in\Z/y_h}(\omega_m^{tx_h})^s)\omega_m^{tr}\reg_G(h)^r=\begin{cases}
        0 & y_h\nmid t\\ E_t(h) & y_h\mid t
    \end{cases}
\end{align}
where $x_h:=\ord(h)$ and $y_h:=\frac{|G|}{x_h}$, and 
\begin{align}
    E_t(h):=\frac{1}{x_h}\sum_{r\in\Z/x_h}\omega_m^{tr}\reg_G(h)^r
\end{align}
is a projection for $y_h|t$ with rank
\begin{align}
    \tr E_t(h)=\frac{1}{x_h}\sum_{r\in\Z/x_h}\omega_m^{tr}\chi_{\reg}(h^r)=y_h.
\end{align}
Given $\vec{t}\in(\Z/m)^k$ we have
\begin{align}
    \tr(M_k(\vec{t}))&=\frac{1}{|G|^{2k}}\sum_{h\in G}\prod_{i=1}^{k}\tr F_{t_i}(h)\\
    &=\frac{1}{|G|^{2k}}\sum_{h\in G}(\frac{|G|}{\ord(h)})^k\:\1_{y_h\mid t_1, ..., y_h\mid t_k}\\
    &=\frac{1}{|G|^k}\sum_{s\in\Z/|G|}\frac{N_s}{s^k}\:\1_{\frac{|G|}{s}\mid t_1, ..., \frac{|G|}{s}\mid t_k}
\end{align}
where
\begin{align}
    N_s:=|\{h\in G:\ord(h)=s\}|.
\end{align}

\begin{prop}
    Let $R_G$ be the set of orders that occur in $G$ and $g(G):=\max R_G$; Note that $g$ is Landau's function when $G=S_n$. Then
    \begin{align}\label{eq: upper bound for q_k}
        q_k\leq\sqrt{\frac{g(G)^{k+1}}{|G|}}.
    \end{align}
\end{prop}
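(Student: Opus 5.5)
The plan is to bound $\tr\sqrt{M_k}$ blockwise and then apply Cauchy--Schwarz using the rank and trace of each block. By \cref{eq: rho_h after DFT}, every $\tilde{\rho}_h$ is block diagonal in the Fourier basis $\{\ket{f_t}\}$ of the first register, with the same block decomposition for all $h$. Hence $M_k=\bigoplus_{\vec{t}\in(\Z/m)^k}M_k(\vec{t})$ with
\begin{align}
M_k(\vec t)=\frac{1}{|G|^{2k}}\sum_{h\in G}\bigotimes_{i=1}^k F_{t_i}(h),
\end{align}
and so $\tr\sqrt{M_k}=\sum_{\vec t}\tr\sqrt{M_k(\vec t)}$.

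On each block we use the elementary inequality $\tr\sqrt{A}\le\sqrt{\rank(A)\,\tr A}$ for $A\ge 0$, which is Cauchy--Schwarz applied to the eigenvalues. Bounding the rank by the full block dimension $|G|^k$ gives
\begin{align}
\tr\sqrt{M_k(\vec t)}\le\sqrt{|G|^k\,\tr M_k(\vec t)}.
\end{align}
A second Cauchy--Schwarz over the sum in $\vec t$ is wasteful, so I would instead bound the number of contributing blocks. The trace formula shows that $\tr M_k(\vec t)\neq 0$ only if some order $s\in R_G$ satisfies $\frac{|G|}{s}\mid t_i$ for all $i$. For a fixed $s$ there are exactly $s^k$ such $\vec t$. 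Summing over $s$ and exchanging the order of summation gives
\begin{align}
\sum_{\vec t}\tr M_k(\vec t)=\frac{1}{|G|^k}\sum_s N_s=\frac{1}{|G|^{k-1}}.
\end{align}
Applying Cauchy--Schwarz over the support $T$ of $\vec t$ then yields
\begin{align}
q_k\le\frac{1}{|G|}\sqrt{|T|\,|G|^k\cdot |G|^{1-k}}=\sqrt{\frac{|T|}{|G|}}.
\end{align}

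It remains to show $|T|\le g(G)^{k+1}$. Since $T\subseteq\bigcup_{s\in R_G}\{\vec t:\frac{|G|}{s}\mid t_i\ \forall i\}$, we have $|T|\le\sum_{s\in R_G}s^k\le |R_G|\,g(G)^k$. Every order is at most $g(G)$, so $|R_G|\le g(G)$, and therefore $|T|\le g(G)^{k+1}$, which is \cref{eq: upper bound for q_k}. A sharper count would use that the sets for $s\mid s'$ are nested, so $T$ is the union over the maximal orders, but the crude bound already suffices.

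The main obstacle is bookkeeping the normalization. We must verify that the block dimension is $|G|^k$, i.e.\ each $F_t(h)$ acts on $\C[G]$, and that the prefactor in $M_k$ produces exactly the $|G|^{-2k}$ used in the trace formula, so that the powers of $|G|$ cancel to give the stated $\sqrt{g(G)^{k+1}/|G|}$. If the exponents do not cancel as expected, I would rederive $\tr M_k(\vec t)$ directly from $\frac{1}{|G|}\tilde{\rho}_h$, using $\tr F_t(h)=y_h$, before applying the two Cauchy--Schwarz steps.
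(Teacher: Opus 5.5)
Your argument is correct and is essentially the paper's proof. The paper applies $\tr\sqrt{M_k}\le\sqrt{\rank(M_k)\,\tr M_k}$ once to all of $M_k$, using $\rank(M_k)\le N_k|G|^k$ (where $N_k$ is the size of the union over $r\in R_G$ of the sets $\{\vec t:\frac{|G|}{r}\mid t_i\ \forall i\}$, i.e.\ exactly your $T$) and $\tr M_k=|G|^{1-k}$; your blockwise bound followed by Cauchy--Schwarz over $T$ is the same inequality split into two steps, and the final count $|T|\le\sum_{r\in R_G}r^k\le g(G)^{k+1}$ is identical.
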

\begin{proof}
    By \cref{eq: rho_h after DFT}, we have
    \begin{align}
        \rank(M_k)\leq N_k|G|^k
    \end{align}
    where 
    \begin{align}
        N_k:=|\bigcup_{r\in R_G}\{\vec{t}\in(\Z/|G|)^k:\frac{|G|}{r}\mid t_i\:\forall\:i\}|\leq\sum_{r\in R_G}|\{\vec{t}:\frac{|G|}{r}\mid t_i\:\forall\:i\}|=\sum_{r\in R_G}r^k\leq|R_G|g(G)^k\leq g(G)^{k+1}.
    \end{align}
    Therefore
    \begin{align}
        q_k=\frac{1}{|G|}\tr\sqrt{M_k}\leq\frac{1}{|G|}\sqrt{\rank(M_k)\tr(M_k)}\leq\sqrt{\frac{g(G)^{k+1}}{|G|}}.
    \end{align}
\end{proof}

\begin{ex}
    When $G=(\Z/2)^n$, $g(G)=2$, so $k_{\min}=\Theta(\log|G|)$.
\end{ex}

\begin{cor}
    When $G=S_n$, we have the following lower bound for sample complexity:
    \begin{align}
        k_{\min}\geq\Omega(\sqrt{n\log n})=\Omega(\sqrt{\log|G|}).
    \end{align}
\end{cor}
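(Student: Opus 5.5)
The plan is to combine the upper bound \eqref{eq: upper bound for q_k} with the upper half of \cref{lemma: generalized Holevo-Curlander}. We need $\psucc^*\le q_k$, and for the optimal success probability to reach the constant threshold $\frac{2}{3}$ we must have $q_k\ge \frac{2}{3}$. By \eqref{eq: upper bound for q_k} this forces
\begin{align}
    g(S_n)^{k+1}\geq \tfrac{4}{9}|S_n| = \tfrac{4}{9}\, n!,
\end{align}
that is,
\begin{align}
    (k+1)\log g(S_n)\geq \log n! - O(1).
\end{align}
Hence
\begin{align}
    k_{\min}\geq \frac{\log n! - O(1)}{\log g(S_n)}-1.
\end{align}
So everything reduces to comparing $\log n!$ with the logarithm of Landau's function.

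The first growth rate is standard: Stirling's formula gives $\log n! = n\log n - n + O(\log n) = \Theta(n\log n)$. For the second we invoke Landau's theorem $\log g(n)\sim\sqrt{n\log n}$. If we want to stay self-contained, the upper bound $\log g(n) = O(\sqrt{n\log n})$ is all that is needed, and it is elementary. An element of order $g(n)$ has order equal to the lcm of its cycle lengths. We may take these to be prime powers $p_i^{a_i}$ with distinct primes and $\sum p_i^{a_i}\le n$. Then $\log g(n)=\sum a_i\log p_i\le \sum \log(p_i^{a_i})$, and we bound this by splitting into terms with $p_i^{a_i}\le L$ and terms with $p_i^{a_i}>L$, where $L=\sqrt{n\log n}$.
\begin{itemize}
    \item The small terms: there are at most $\pi(L)=O(L/\log L)$ of them, one per distinct prime, each contributing at most $\log L$. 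Their total is $O(L)$.
    \item The large terms: since $\log x/x$ is decreasing, each satisfies $\log(p^a)\le p^a\cdot \frac{\log L}{L}$. Summing and using $\sum p^a\le n$ gives a total of at most $n\log L/L = O(\sqrt{n\log n})$.
\end{itemize}

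Plugging in,
\begin{align}
    k_{\min}\geq \frac{\Theta(n\log n)}{O(\sqrt{n\log n})}-1=\Omega(\sqrt{n\log n}).
\end{align}
Finally, $\log|S_n|=\log n! = \Theta(n\log n)$ gives $\sqrt{n\log n}=\Theta(\sqrt{\log|G|})$, which is the claim.

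The main subtle point is the bound on Landau's function. Citing Landau's asymptotic directly makes the corollary immediate. The elementary splitting argument above is the fallback, and the only care needed there is the prime-power reduction: replacing a cycle of length $\ell$ by cycles of the prime-power parts of $\ell$ keeps the lcm and does not increase the total length.
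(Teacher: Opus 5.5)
Your proposal is correct and follows the paper's proof: you combine the upper half of the generalized Holevo--Curlander inequality with the bound $q_k\le\sqrt{g(G)^{k+1}/|G|}$ and the growth $\log g(S_n)=O(\sqrt{n\log n})$ to get $k_{\min}=\Omega(\log n!/\log g(S_n))=\Omega(\sqrt{n\log n})$. The only addition is your self-contained prime-power splitting argument in place of citing Landau's asymptotic; it is valid, and it correctly notes that only the upper bound on $\log g(S_n)$ is needed, though it does rely on Chebyshev's estimate $\pi(L)=O(L/\log L)$ to get the sharp $\sqrt{n\log n}$ rate.
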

\begin{proof}
    Using the asymptotic estimate of Landau's function \cite{Landau1903}
    \begin{align}
        \lim_{n\to\infty}\frac{\log g(S_n)}{\sqrt{n\log n}}=1,
    \end{align}
    \cref{eq: upper bound for q_k} shows that \begin{align}
        k_{\min}\geq\Omega(\frac{\log|G|}{\log g(G)})=\Omega(\sqrt{n\log n}).
    \end{align}
\end{proof}

\section{Conclusion}
We have formulated the generalized hidden shift problem over generic finite
groups and analyzed the sample complexity of distinguishing the associated
hidden-shift states under a uniform prior. The quantum Fourier transform and
Schur's orthogonality provide a block-diagonal description of the states and
the pretty good measurement. Exploiting the
fact that the hidden-shift states are normalized projectors, we have applied
the generalized Holevo--Curlander inequality to obtain upper and lower bounds
in terms of group-theoretic data. These estimates determine the tight scaling
$k_{\min}=\Theta(\log |G|)$ for $G=(\mathbb{Z}/2)^n$; and a lower bound $k_{\min}=\Omega(\sqrt{n\log n})=\Omega(\sqrt{\log |G|})$ for symmetric group $G=S_n$, where the generalized hidden shift problem provides a direct formulation of rigid graph isomorphism problem.

\section*{Acknowledgement and Disclosure of AI usage}
\paragraph*{Acknowledgement} The author acknowledges helpful discussions with Srinivasan Arunachalam and Felix Leditzky, and thanks Srinivasan Arunachalam for his mentorship during the externship program at IBM Research, where he suggested the idea that led to this work. This work was supported by a grant through the IBM-Illinois Discovery Accelerator Institute as well as National Science Foundation Grant No. 2426103. 

\paragraph*{Disclosure of AI usage}
The main idea and overall structure of this article were established in 2025. The literature search and the refinement and polishing of the proofs have been assisted by GPT-O3 and subsequent models. The author has reviewed and verified the entire article and assumes full responsibility for its accuracy and correctness.

\printbibliography
\end{document}